\documentclass[letterpaper, 10 pt, conference]{ieeeconf}  

\IEEEoverridecommandlockouts                              

\usepackage{graphics}       
\usepackage{epsfig}         
\usepackage{times}          
\usepackage{amsmath}        
\usepackage{amssymb}        
\usepackage{color}
\usepackage{array}
\usepackage{accents}
\usepackage{url}
\usepackage{xcolor}

\newtheorem{theorem}{Theorem}
\newtheorem{algorithm}{Algorithm}

\def\be{\begin{equation}}
\def\ee{\end{equation}}
\def\bea{\begin{eqnarray}}
\def\eea{\end{eqnarray}}
\newcommand{\mR}{\mathbb{R}}

\newcommand{\One}{\mathbb{I}}

\renewcommand{\baselinestretch}{0.996}

\DeclareMathOperator*{\argmin}{arg\,min}

\title{\LARGE \bf Adaptive differentiating filter: case study of PID feedback control}

\author{Alexey Pavlov, Michael Ruderman 
\thanks{A. Pavlov is with Norwegian University of Science and
Technology (NTNU), Trondheim, Norway. Email: {\tt\small alexey.pavlov@ntnu.no}}%
\thanks{M. Ruderman is with University of Agder, Grimstad, Norway. Email: {\tt\small michael.ruderman@uia.no}}%
\thanks{-----------------------}%
\thanks{\textcolor[rgb]{0.00,0.07,1.00}{IFAC authors manuscript, 2025}}%
}

\begin{document}

\maketitle
\thispagestyle{empty}
\pagestyle{empty}

\begin{abstract}
This paper presents an adaptive causal discrete‑time filter for derivative estimation, exemplified by its use in estimating relative velocity in a mechatronic application. The filter is based on a constrained least-squares estimator with window adaptation. It demonstrates low sensitivity to low-amplitude measurement noise, while preserving a wide bandwidth for large-amplitude changes in the process signal. Favorable performance properties of the filter are discussed and demonstrated in a practical case study of PID feedback controller and compared experimentally to a standard linear low-pass filter-based differentiator and a robust sliding-mode based homogeneous differentiator. 
\end{abstract}

\bstctlcite{references:BSTcontrol}

\section{Introduction}
\label{sec:1}

Obtaining trustful (thus useful) derivative of a measured signal belongs to the central practical problems in the systems and control theory and engineering. The so-called fixed structure (or fixed size) filters share the same fundamental problem, in other words trade-off between a noise reduction and phase lag (or corresponding delay) in the output signal and its time derivative(s). Possessing the required noise filtering properties and, at the same time, preserving faster transients of the process signal are the key and demanding characteristics, for which the advanced real-time adaptive and implementable methods are required. Especially causal differentiators (these vital for real-time monitoring and control tasks) face performance limitations with respect to the estimate accuracy, computational and memory consumption, and difficulties with tuning of free design parameter(s).

In this communication, we introduce a causal adaptive differentiating filter, particularly suitable for control applications, and demonstrate a detailed case study of applying it in the standard PID feedback loop, where only a noisy output signal is available from sensing. In an experimental real-time setting, the filter demonstrated very low sensitivity to measurement noise while, same time, preserving large bandwidth of the process signal. Without any claim on ample overview of the otherwise existing techniques and due to the space limits, we solely refer to few related works (with references therein) which are providing deeper insight into developments of causal (and also adaptive) differentiation algorithms for noisy and less smooth signals. An overview of numerical differentiation in noisy environment by using algebraic approaches and implementation in terms of a classical finite impulse response (FIR) digital filter can be found e.g. in \cite{mboup2009numerical}. A lecture note on Savitzky-Golay filter of smoothing the data without distorting the signal tendency, by principle of local least-squares approximation(s) which is also in core of our proposed approach, can be found in \cite{schafer2011savitzky}. Also the robust sliding-mode based (see \cite{utkin1992} for basics) differentiators, to one from which established methods we make also comparison in this work, are the objective of intensive theoretical and practical studies in the last decade, cf. e.g. survey with experiments in \cite{mojallizadeh2023}. Another remarkable practical approach, which uses a time adaptive windowing and least-squares fitting, and, thus, provides certain motivation for our developed solution, can be found in \cite{janabi2000d}. Compared to \cite{janabi2000d}, our approach separates window adaptation from the calculation of the signal derivative estimate, which leads to an optimal window adaptation and higher computational efficiency.  

The rest of the paper is as follows. The adaptive differentiating filter is introduced in Section \ref{sec:2} by providing the concept, algorithms, implementation steps, and properties. The second piece of the main results is given in Section \ref{sec:3}, presenting a detailed case study of using the proposed filter in a real-time PID control application with experiments. 
Omitted detailed discussion (which is up to our future works) is justified by the main conclusions in Section \ref{sec:4}.


\section{Adaptive Differentiating Filter (ADF)}
\label{sec:2}
Consider a signal $\bar x(t)$ sampled at time instances $t_l=l\cdot T_s$, $l=0,1,\ldots$, with  sampling period $T_s\ge 0$:
\begin{equation}\label{noisy_y} x_l=\bar x (t_l)+w_l,
\end{equation}
where $w_l$ is the measurement noise. 
It is assumed that the noise satisfies 
\begin{equation}\label{noise_bound} 
|w_l|\le d, \; \forall \;  l\ge 0,
\end{equation}
where $d>0$ is the upper bound on the noise magnitude. The upper bound  $d$ is assumed to be  known, e.g., from a sensor specification or can be estimated from prior tests. 


\subsection{ADF concept} 

At time instant $t_l$, the outputs of the filter are calculated from the past measurements with indices from the window  
\be W_l(R)=\{l-R, \ldots l\},\ee for some $R\ge 1$, which will be specified later, using the constrained least squares method: 
\be
\label{LMS}
(k^*_l,b^*_l)=\argmin\limits_{\begin{matrix}(k,b)\in \mR^2,\\ |x(t_l)-b|\le \delta\end{matrix} }\sum_{i\in W_l(R)}|x(t_i)-k(t_i-t_l)-b|^2,
\ee
where $k_l^*$ is the filter output providing an estimate of $d\bar x(t_l)/dt$  and $b_l^*$ is the estimate of $\bar x(t_l)$. The constraint is
\be
\label{delta constraint}
|x(t_l)-b|\le \delta,
\ee
where $\delta\ge d$ is a filter parameter introduced to take into account $|x_l-\bar x(t_l)|\le d$, that follows from \eqref{noisy_y}, \eqref{noise_bound}.

While fitting a linear function to the time series using the least-squares method is a standard approach, the main challenge is selecting the window $W_l(R)$. A small window size $R$ will result in accurate following the signal derivative, yet at the cost of high sensitivity of the derivative estimate to  measurement noise. Large $R$ will reduce noise sensitivity. At the same time, for large $R$, the filter (\ref{LMS}) will smooth out large/fast peaks in the signal and, thus, lead to an underestimation of the corresponding changes in the signal derivative. The latter situation occurs when the signal derivative experiences large changes over $W_l(R)$.

To resolve this trade-off, we propose automatically adapting the window size $R$ based on the measurements. The window size $R$ is selected to be the largest positive number such that the time series (\ref{noisy_y}) can be approximated over $W_l(R)$ by a linear function $x=k(t-t_l)+b$ for some $k$ and $b$ with a given absolute accuracy $\delta$:
\begin{align}
\label{linineq}
|x(t_i)-k(t_i-t_l)-b|&\le \delta, \ \  \forall i\in W_l(R), \\
1\le R &\le R_{max}.\nonumber
\end{align}
Here, $\delta>0$ is the same filter parameter as in \eqref{LMS} and $R_{max}$ is maximal window size introduced for a practical implementation. Selecting $R$ in this way, we avoid windows over which the signal is highly dynamic and, thus, avoid filtering out large peaks in the useful signal. At the same time, we keep the window size maximal possible to ensure the lowest possible noise sensitivity of the derivative estimate. 


Note that the problem of verifying the feasibility of \eqref{linineq} does not require an explicit  calculation of the particular $(k,b)$. This makes the window size adaptation independent of the particular method of solving \eqref{linineq} and fully decoupled from solving the least squares problem \eqref{LMS}, as opposed to \cite{janabi2000d}. A dedicated tool for verifying the feasibility of \eqref{linineq} {\it without} finding $(k,b)$ will be presented later in this section.  

To summarize, at each instant $l\ge 1$, the ADF filter solves the following two optimization subproblems:
\begin{itemize}
    \item {\it Window optimization:} find $R^* =R^*(l)$ -- the largest $R\le R_{max}$ such that \eqref{linineq} is feasible for some $(k,b)$.
    \item {\it  Estimates calculation:} for the window $W_l(R^*)$, find the estimates of the signal and its derivative by solving the constrained least squares problem \eqref{LMS}.
\end{itemize}
The set of $R$ satisfying \eqref{linineq}, is always non-empty: for  $R=1$, the window $W_l(R)$ consists only of two points and one can fit a linear function to these two points exactly.  For the optimal $R=R^*$, the solution of \eqref{LMS} exists and is uniquely defined. Thus, the filter is well-defined for all $l\ge 1$. 


\subsection{ADF algorithm}
Several features of the ADF filter allow its efficient implementation. Firstly, due to the linearity of \eqref{linineq}, if that set of inequalities is feasible for some $R$, it is also feasible for all windows with a smaller $R$. In the same way, if \eqref{linineq} is not feasible for some $R$, it will not be feasible for all windows with larger $R$. Therefore, the largest window size $R^*$ can be found from an initial guess by increasing its value while \eqref{linineq} remains feasible, or reducing its value until \eqref{linineq} turns unfeasible. This is reflected in the following algorithm:
\begin{algorithm}
\label{Algorithm1}
\begin{enumerate} 
  \item[]
  \item  $l\leftarrow 0$, $R\leftarrow 1$ \ \ \ // \rm{initialization} 
  \item[] read $x(t_0)$; return $\hat{\frac{dx}{dt}}(t_0)=\emptyset$, $\hat x(t_0)= x(t_0)$
  \item \textbf{while} true \ \ \ // \rm{main loop} 
  \item \ \ \ $l\leftarrow l+1$; $R\leftarrow \min\{R+1, R_{max}\}$
  \item \ \ \ read $x(t_l)$
  \item \ \ \ \textbf{while} (\ref{linineq}) is not feasible for $W_l(R)$ 
  \item[] \ \ \ \ \ \  $R\leftarrow R-1$ \item[] \ \ \ \textbf{end}
  \item \ \ \ $R^*\leftarrow R$ \ \ \
  // \rm{optimal R} 
  \item \ \ \  calculate  $k_l^*,b_l^*$  from (\ref{LMS}) for $W_l(R^*)$
  \item \ \ \  return $\hat{\frac{dx}{dt}}(t_l)= k_l^*$, $\hat x(t_l)= b_l^*$
  \item \textbf{end}
\end{enumerate}
\end{algorithm}
Checking the feasibility of \eqref{linineq} in step 5) and solving the least squares fit problem in step 7) may seem to be computationally critical for real-time applications. 
Below, we present the efficient methods for real-time ADF implementation.

\subsection{Tools for efficient ADF implementation}

The following theorem allows one to check the feasibility of \eqref{linineq} without calculating the linear approximation.
\begin{theorem}\label{th:mM}
  Inequality (\ref{linineq}) is feasible if and only if 
  \begin{eqnarray}
    m &\le& M, \quad \hbox{ where} \label{mM}  \\
    m &=& \mathop{max}_{i,j\in W|\ t_i> t_j}\frac{x(t_i)-x(t_j)}{t_i-t_j}-\frac{2\delta}{t_i-t_j}, \label{m} \\
    M &=& \mathop{min}_{i,j\in W|\ t_i> t_j}\frac{x(t_i)-x(t_j)}{t_i-t_j}+\frac{2\delta}{t_i-t_j}. \label{M} 
  \end{eqnarray}   
\end{theorem}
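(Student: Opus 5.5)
Write $x_i=x(t_i)$, $s_i=t_i-t_l\le 0$ and $W=W_l(R)$. The plan is to treat \eqref{linineq} as a feasibility problem in the slope $k$ alone. For a fixed $k$, the intercept $b$ must satisfy $x_i-ks_i-\delta\le b\le x_i-ks_i+\delta$ for every $i\in W$. Such a $b$ exists if and only if $\max_i(x_i-ks_i)-\min_j(x_j-ks_j)\le 2\delta$, which is equivalent to
\[
(x_i-ks_i)-(x_j-ks_j)\le 2\delta\quad\mbox{for all ordered pairs } (i,j)\in W\times W .
\]
So \eqref{linineq} is feasible if and only if some $k\in\mR$ satisfies all these pairwise inequalities. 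This first step is just the observation that a finite family of real intervals has a common point exactly when the largest left endpoint does not exceed the smallest right endpoint.

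Next I rewrite each pairwise inequality as a bound on $k$. Take a pair with $t_i>t_j$ (so $s_i-s_j=t_i-t_j>0$). The inequality for the ordered pair $(i,j)$ is $x_i-x_j-k(t_i-t_j)\le 2\delta$, which gives $k\ge \frac{x_i-x_j}{t_i-t_j}-\frac{2\delta}{t_i-t_j}$. The inequality for $(j,i)$ gives $k\le \frac{x_i-x_j}{t_i-t_j}+\frac{2\delta}{t_i-t_j}$. Pairs with $i=j$ impose nothing, since $\delta>0$. Every ordered pair with $i\neq j$ arises exactly once in this way after relabelling so that the later time comes first. Hence the set of admissible $k$ is exactly the interval $[m,M]$, with $m$ and $M$ as in \eqref{m} and \eqref{M}.

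Combining the two steps, \eqref{linineq} is feasible if and only if $[m,M]\neq\emptyset$, that is, if and only if $m\le M$. For the sufficiency direction I pick any $k\in[m,M]$, then take $b$ between $\max_i(x_i-ks_i)-\delta$ and $\min_i(x_i-ks_i)+\delta$. The main point needing care is the first reduction: I must check that the pairwise condition is genuinely equivalent to the max–min condition, so that no joint constraint over three or more points is lost. This holds because the difference between the maximum and the minimum over a finite set is itself attained at one particular pair. Finally, $W$ always contains at least two points (since $R\ge1$), so $m$ and $M$ are well defined.
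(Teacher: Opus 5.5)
The paper does not prove this theorem (the proof is explicitly omitted for space reasons), so there is no argument of the authors' to compare with; your proposal is a correct and complete proof that fills that gap.

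Your route is Fourier--Motzkin elimination. You first eliminate $b$: the intervals $[x_i-ks_i-\delta,\,x_i-ks_i+\delta]$ share a point iff every left endpoint lies below every right endpoint. You then eliminate $k$: an intersection of half-lines is nonempty iff the largest lower bound is at most the smallest upper bound. Your concern about losing joint constraints over three or more points in the first step is resolved by exactly this pairwise left/right comparison.

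The argument also gives more than the statement. It identifies the whole set of feasible slopes as $[m,M]$ and gives an explicit feasible $b$ for any $k$ in it. It never uses uniform sampling, which fits the paper storing sample times in a buffer.

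The only hypothesis you use silently is that the sample times in $W$ are pairwise distinct. You need it when you relabel every ordered pair with $i\neq j$ so that the later time comes first. It holds for $t_l=lT_s$ with $T_s>0$, which is how the paper's ``$T_s\ge 0$'' has to be read. If two samples shared a time, the quotients defining $m$ and $M$ would be undefined. That pair would also impose the extra constraint $|x_i-x_j|\le 2\delta$, which $m\le M$ does not encode. State this assumption explicitly and the proof is done.
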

\vspace{2mm}
{\it Proof:} The proof is omitted due to space limits and will be provided in future works.

Calculation \eqref{m} and \eqref{M} can be made efficient by reusing the corresponding calculations of $m$ and $M$  from the previous iteration. Notice, that all windows $W_l(R)$ in Algorithm~\ref{Algorithm1} are obtained from the windows from previous iterations by one or more of the following elementary operations: adding a new measurement point to an already existing window $W=W_l(R)$ from the right (step 3, $W=W_l(R)\rightarrow W^+:=W_{l+1}(R+1)$), and removing one point from the left (step 5, $W_l(R)\rightarrow W^-:=W_{l}(R-1)$). Below, we will demonstrate how $m$ and $M$ evolve under these elementary operations. 

Consider a window $W:=W_l(R)$. Define the vectors $\bar m (W)=[m_1,m_2,\ldots m_{R}]$, $\bar M (W)=[M_1,M_2, \ldots M_{R}]$ of
\begin{eqnarray}
  m_k(W) &=& \max_{\begin{array}{c}
                      j=l-k \\
                      i=j,\ldots l
                    \end{array}
}\left (\frac{x_i-x_j-2\delta}{t_i-t_j}\right ), \label{barm} \\[-0.5mm]
  M_k(W) &=& \min_{\begin{array}{c}
                      j=l-k \\
                      i=j,\ldots l
                    \end{array}
}
\left (\frac{x_i-x_j+2\delta}{t_i-t_j}\right ), \label{barM}
\end{eqnarray}
for $k=1, \ldots, R$. If the vectors $\bar m (W)$ and $\bar M (W)$ are known, then the numbers $m$ and $M$ defined in (\ref{m}) and (\ref{M}) for window $W$ can be obtained from
\bea \label{mW} m(W)&=&\max [m_1, \ldots m_{R}],\\
\label{MW}M(W)&=&\min [M_1, \ldots M_{R}].
\eea

Suppose that window vectors $\bar m (W)$ and $\bar M (W)$ are known for $W=W_l(R)$. Let us derive how they can be calculated for two new additional windows: $W^+=W_{l+1}(R+1)$ and $W^-=W_{l}(R-1)$.

\subsubsection{Adding a point to the right: $W\rightarrow W^+$}
As follows from (\ref{barm}) and the definition of  $W_{l+1}(R+1)$, we have
\bea
\label{mupd}
 &&m_k(W^+)=\max_{\begin{array}{c}
                      j=l+1-k \\
                      i=j,\ldots l+1
                    \end{array}
}\left (\frac{x_i-x_j-2\delta}{t_i-t_j}\right )\nonumber\\
&&=\max\left[\max_{\begin{array}{c}
                      j=l+1-k \\
                      i=j,\ldots l
                    \end{array}
}\left(\frac{x_i-x_j-2\delta}{t_i-t_j}\right ),\right . \\
&&\left .\frac{x_{l+1}-x_{l+1-k}-2\delta}{t_{l+1}-t_{l+1-k}}
\right], \ k=1,\ldots R+1. \nonumber
\eea
Since the first element under the max function in (\ref{mupd}) equals $m_{k-1}(W)$, we obtain
\bea
m_1(W^+)&=&\frac{x_{l+1}-x_{l}-2\delta}{t_{l+1}-t_{l}} \label{mupdate+} \\
m_k(W^+)&=&\max\left[m_{k-1}(W),\frac{x_{l+1}-x_{l+1-k}-2\delta}{t_{l+1}-t_{l+1-k}}\right], \nonumber\\
  &&k=2,\ldots R+1. \nonumber
\eea
Repeating the same steps for $M_k(W^+)$, we obtain
\bea
M_1(W^+)&=&\frac{x_{l+1}-x_{l}+2\delta}{t_{l+1}-t_{l}}  \label{Mupdate+} \\
M_k(W^+)&=&\min\left[M_{k-1}(W),\frac{x_{l+1}-x_{l+1-k}+2\delta}{t_{l+1}-t_{l+1-k}}\right], \nonumber\\
  &&k=2,\ldots R+1.\nonumber
\eea
Eqs. (\ref{mupdate+}) and (\ref{Mupdate+}) allow one to efficiently calculate $\bar m(W^+)$ and $\bar M(W^+)$ from $\bar m(W)$ and $\bar M(W)$. Together with (\ref{mW}), (\ref{MW}) it leads to efficient calculation of $m(W^+)$, $M(W^+)$.

\subsubsection{Removing a point from the left: $W\rightarrow W^-$} This case
\be  m_k(W^-)= \max_{\begin{array}{c}
                      j=l-k \\
                      i=j,\ldots l
                    \end{array}
}\left (\frac{x_i-x_j-2\delta}{t_i-t_j}\right ),
\ee
for $k=1,\ldots R-1$
Hence,
\be \label{mupdate-}m_{k}(W^-)=m_{k}(W),\ \ k=1, \ldots R-1. \ee
In the same way,
\be \label{Mupdate-}M_{k}(W^-)=M_{k}(W),\ \ k=1, \ldots R-1. \ee
Eqs. (\ref{mupdate-}) and (\ref{Mupdate-}) allow one to efficiently calculate $\bar m(W^-)$ and $\bar M(W^-)$ from $\bar m(W)$ and $\bar M(W)$. Together with (\ref{mW}), (\ref{MW}) it leads to efficient calculation of $m(W^-)$, $M(W^-)$.

\subsubsection{Calculation of least squares fit}
One can solve the constrained least-squares problem \eqref{LMS} using existing algorithms, e.g., available in MATLAB. However, they may not be suitable for a real-time implementation on a dedicated hardware.
Since \eqref{LMS} is a special case of the generic least squares fit problem, it allows an analytic solution presented below, which can be implemented in a real-time setting.  

Consider window $W=W_l(R)$. Let $T:=(t_{l-R}-t_l, \ldots, t_{l-1}-t_l, 0)^\top$, $X:=(x_{l-R}, \ldots, x_{l-1}, x_{l})^\top$, $\tau =\sum_{i=0}^R t_{l-i}$, $\chi =\sum_{i=0}^R x_{l-i}$ $\One:=(1,\ldots, 1)^\top$  Then the solution of the unconstrained problem \eqref{LMS} is given by 
\begin{equation}
\label{unconstrained_solution}
k_l^*=X^\top\Phi, \ \ \ b_l^*=X^\top\Psi, \quad \hbox{ where}
\end{equation}
\begin{equation}
\label{PhiPsi}
\Phi=\frac{T-\frac{\tau}{R+1}\One}{T^\top(T-\frac{\tau}{R+1}\One)},\ \
\Psi= \frac{\One}{R+1} - \Phi\tau.
\end{equation}
For the constrained least squares problem, if  $|b^*_l-x_l|>\delta$, then the solution to the constrained problem \eqref{LMS} is given by
\begin{align}
\tilde b^*_{l}&=x_l+\mathrm{sign}(b^*_l-x_l)\delta,\\
\tilde k^*_{l}&=\frac{X^\top T-\tilde b^*_{l} ((R+1)t_l-\tau)-t_l\chi}{T^\top T+(R+1)t_l^2-2t_l\tau}.
\end{align}
For applications with a fixed sampling rate, the vectors $\Phi$ and $\Psi$ will only depend on the size of the window $R$, and will not depend on $l$. This allows one to pre-compute them upfront, which makes the calculation more efficient. 

\subsection{Properties of ADF filter}

\subsubsection{Tuning and performance} ADF has two parameters: $R_{max}$ and $\delta$. One can choose  $R_{max}$ sufficiently large and control the filter performance by the main parameter $\delta$. The estimate $\hat x_l$ of the signal provided by ADF satisfies, cf. \eqref{LMS},
\be|x_l-\hat x_l|\le \delta.\ee
Thus, for  $\delta$ smaller than the noise amplitude $d$, the output and its estimated derivative will reflect noise. Very large $\delta$ leads to filtering through (\ref{LMS}) over the largest moving window of the size $R_{max}$. The optimal tuning is with $\delta$ right above the noise amplitude $d$. With such tuning, the filter has a wide pass-band for signals of sufficiently high amplitudes relative to $\delta$ and low sensitivity to measurement noise with amplitudes below $\delta$, that is often required in applications.  
  
\subsubsection{Memory requirements}
Given parameter $R_{max}$, ADF requires two arrays of size $1\times R_{max}$ for storing the buffer with the last $R_{max}$ measurements and samples times, two arrays $\bar m$ and $\bar M$ of the size $1\times R_{max}$, as well as several scalar variables for real-time calculations. 


\section{PID control case study}
\label{sec:3}

A standard PID control, see e.g. \cite{Astrom2005advanced} for basics and advanced discussions, is taken as a case study for evaluating the properties and practical applicability of the above introduced adaptive differentiating filter. Recall that any PID-type feedback regulator assumes a first-order time derivative of the output value, which is (more than often) not available through sensor measurements. For an across domains example, any motion control system with PID feedback control would require measurement of the relative velocity, i.e. $\dot{x}(t)$, for which there is hardly any sensor technology suitable for feedback. Thus, a practical differentiating filter, in terms of unavoidable phase-lag characteristics and a subsequently achievable signal-to-noise ratio (SNR), appears crucial for effectiveness of the PID control in countless applications.

\subsection{PID feedback with symmetrical optimum}
\label{sec:3:1}

PID feedback regulator, parameterized by the overall control gain $K_p$ and two time constants -- of the integrator and differentiator $T_i$ and $T_d$ respectively -- can be written in Laplace $s$-domain in a parallel form as, cf. e.g. \cite{franklin2019},
\begin{equation}\label{eq:3:1}
v(s) = K_p \Bigl(1 + \dfrac{1}{T_i s} + T_d s  \Bigr) e(s).
\end{equation}
The control error is $e = r - x$, where $x$ is the measured system output and $r$ is the control reference value. In the following, it is assumed that the latter is at least once differentiable, while $|\dot{r}| < R$ and the upper bound $0 < R < \infty$ is the matter of an application's specification. We also note that the transfer function $C(s) = v(s)/e(s)$ of the PID control \eqref{eq:3:1} is strictly speaking improper, but the application of an (arbitrary) differentiating filter to determine $e(s) s$ will provide additional low-pass properties. Thus and, moreover, due to a parallel connection of the PID control \eqref{eq:3:1} there are no issues with an improper transfer function $C(s)$.
\begin{figure}[h!]
\centering 
\includegraphics[width=0.99\columnwidth]{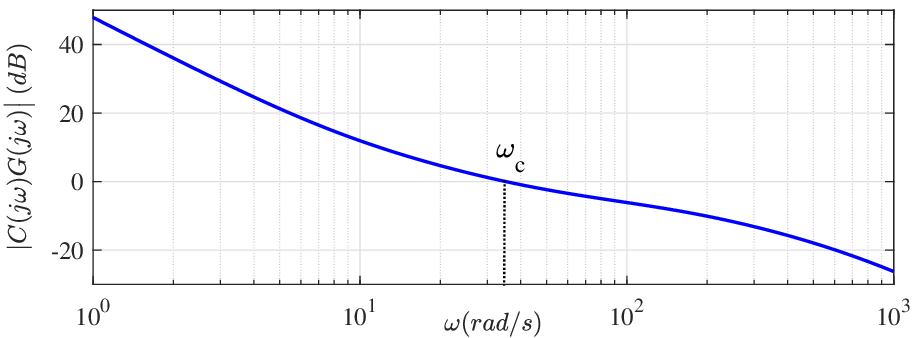}
\includegraphics[width=0.99\columnwidth]{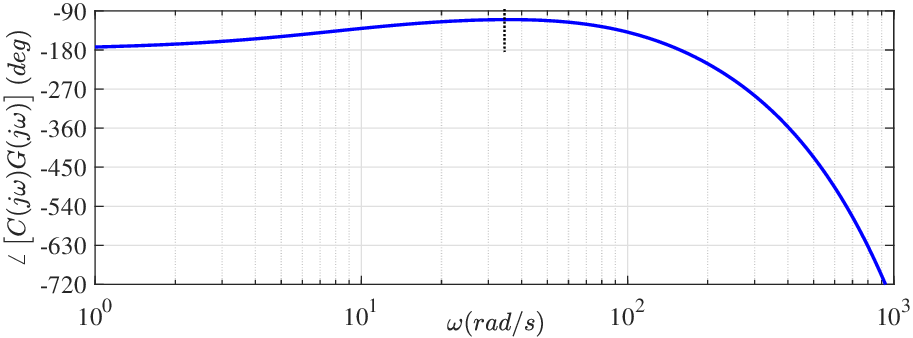}
\caption{Shaped loop transfer function $C(j\omega)G(j\omega)$ with identified system plant $G(j\omega)$ and PID control $C(j\omega)$ designed by symmetrical optimum.}
\label{Fig:LoopOpt} 
\end{figure}

For determining the control parameters $K_p,\, T_i,\, T_d > 0$, we proceed with loop-shaping by following the principles of so-called symmetrical optimum \cite{kessler1958}. The symmetrical optimum, in frequency domain, is achieved by shaping a $-20$ dB/decade slope around the cross-over frequency $\omega_c$, while both corner frequencies left- and right-hand-side from $\omega_c$ are shaped to be approximately equidistant on the logarithmic scale, see Fig. \ref{Fig:LoopOpt}. Worth noting is also that the corresponding system plant transfer function $G(s)$, cf. the present system  in section \ref{sec:3:4}, is equally crucial when shaping the loop by a symmetrical optimum principle. The determined control parameters thus determined are $K_p=420$, $T_i = 0.07$, and $T_d = 0.03$, while it is explicitly emphasized that one and the same designed PID controller \eqref{eq:3:1} is experimentally evaluated below with three different differentiators, each one delivering an $\dot{x}(t)$ approximation. Further we notice that a constant gravity compensating term $\gamma$ is also used, though without affecting the design and analysis, so that the overall control signal is $u(t) = v(t) + \gamma$.

\subsection{Linear filtered differentiation}
\label{sec:3:2}

Direct discrete time differentiation of the measured output value $x(t)$ is considered the first and most intuitive (and also simplest) differentiation approach for comparison. For the given sampling time $T_s$, it leads to a finite differences
\begin{equation}\label{eq:3:2}
\dot{x} \approx \bigl(x(t)-x(t-T_s)\bigr) / T_s.
\end{equation}
Since the measured $x(t)$ contains also some unavoidable (parasitic) noise $w(t)$, and the signal-to-noise ratio (SNR) has usually some finite value $\ll \infty$, the finite difference method \eqref{eq:3:2} becomes more noise-sensitive when $T_s$ decreases. Therefore, an auxiliary low-pass filter (LPF) is required to let the net signal $\dot{x}(t)$ through and, correspondingly, to attenuate the noise related $\dot{w}(t)$ with frequencies higher than some cutoff frequency $\omega_0$, which is the design parameter. In a most simple case, the LPF can be assumed as a 2nd order critically-damped system with unity gain, parameterized by its natural frequency $\omega_0$. Following the lines of designing the control system in continuous time, cf. section \ref{sec:3:1}, while assuming a sufficiently low $T_s$ for an afterwards discrete-time implementation, the linear filtered differentiator (further denoted as LDF) is given by  
\begin{equation}\label{eq:3:3}
\textrm{LDF}(s) = s \, \omega_0^2 \bigl(s^2 + 2 \omega_0 s + \omega_0^2 \bigr)^{-1}.
\end{equation}

\subsection{Robust finite-time differentiator}
\label{sec:3:3}

One of the most widespread robust (and moreover finite-time) differentiators is the sliding-mode based one proposed by Levant \cite{levant1998robust,levant2003} and relying, in its simplest form, on the so-called super-twisting based algorithm (STA). Its remarkable feature is insensitivity to a bounded noise and the finite-time convergence to $n$-th time-derivative of the input signal $x(t)$, provided the Lipschitz constant of its $n$-th time-derivative is available. While intensive research on this class of differentiators was driven in the last two decades, we will solely denote it as robust exact differentiator (shortly  RED), and use its well established parameterization approach developed to a toolbox \cite{reichhartinger2017robust}. The latter has an intuitive and straightforward non-recursive structure and is implementing the homogeneous discrete-time differentiator, as this was proposed in \cite{livne2014proper}, while requiring the single scaling factor $\kappa > 0$ as design parameter. Worth noting is that $\kappa^{n+1}$ corresponds to the Lipschitz constant of the highest derivative ${x}^{(n)}$. The second-order RED is given, cf. \cite{reichhartinger2017robust},
\begin{subequations}\label{eq:RED}
\begin{align}
\dot{z}_0 &= z_1 + 3.1~ \kappa~ \bigl| x-z_0 \bigr|^{\frac{2}{3}} ~\mathrm{sign}(x-z_0),\\[0mm]
\dot{z}_1 &= z_2 + 3.2~ \kappa^2~ \bigl| x-z_0 \bigr|^{\frac{1}{3}} ~\mathrm{sign}(x-z_0),\\[1mm]
\dot{z}_2 &= 1.1 ~ \kappa^3~ \mathrm{sign}(x-z_0),
\end{align}
\end{subequations}
and it provides $z_0(t) = x(t)$, $z_1(t) = \dot{x}(t)$, $z_2(t) = \ddot{x}(t)$ for all $t >
t_c$, where $t_c$ is a finite convergence time. We explicitly notice that the
second-order RED (and not the first-order one) is purposefully used in our study,
in order to obtain a smoother estimate $z_1(t)$ of the output derivative $\dot{x}(t)$.

\subsection{Experimental actuator system}
\label{sec:3:4}

\subsubsection{Hardware setup}
\label{sec:3:4:1}

The used actuator system in a laboratory setting is shown in Fig. \ref{Fig:ExpSetup}. The translational relative motion with a mechanically limited displacement range about $x \in [0, \ldots, 0.018)$ m is actuated by a voice-coil-motor. The real-time control and data acquisition Speedgoat I/O board is running at 2 kHz sampling frequency $f_s = 1/T_s$. The controller output is the terminal voltage $u(t) \in [0, \ldots, 10]$ V of the voice-coil-motor, while its nominal electro-magnetic time constant (given by the manufacturer data-sheet) is $\mu = 0.0012$ sec.   
\begin{figure}[h!]
\centering \includegraphics[width=0.52\columnwidth]{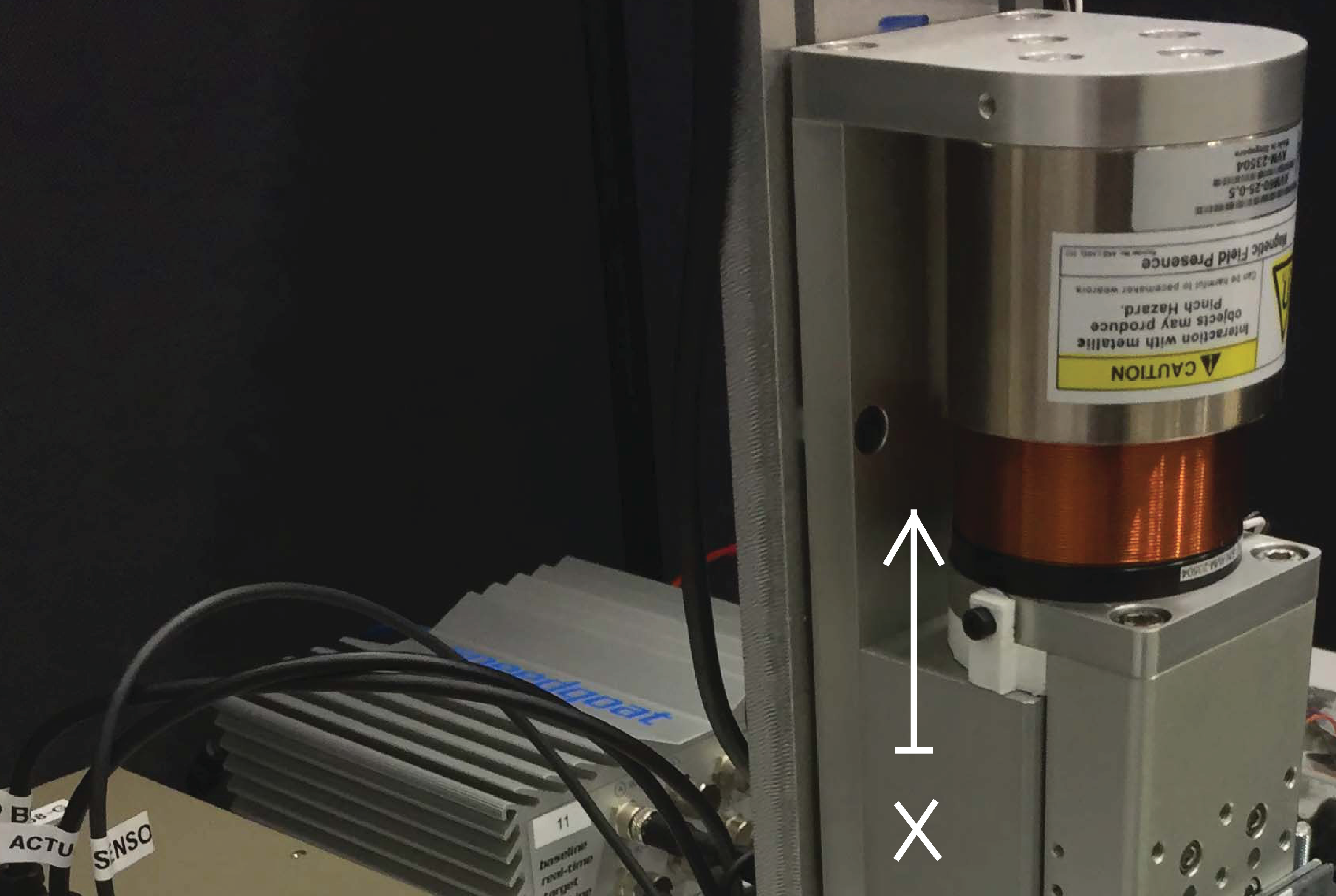}
\caption{Actuator system in laboratory setting with translational motion $x$.}
\label{Fig:ExpSetup} 
\end{figure}
The system output relative displacement $x(t)$ is measured contactless by a down placed inductive distance sensor with a nominal repeat accuracy of $\pm 12$ micrometers. Such sensing location and overall hardware configuration result in a relatively high level of the measurement noise, cf. below in section \ref{sec:3:5}. Further details on the experimental system in use and identification can be found in \cite{ruderman2022motion,ruderman2022disturbance}.

\subsubsection{FRF identified model}
\label{sec:3:4:2}

The input-output system model \eqref{eq:4:1} was accurately identified from the point-wise measured FRF (frequency response function) as shown in Fig. \ref{Fig:FRF}. 
\begin{figure}[h!]
\centering \includegraphics[width=0.99\columnwidth]{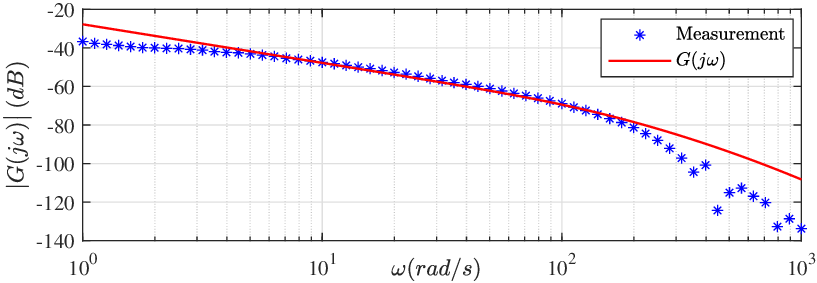}
\centering \includegraphics[width=0.99\columnwidth]{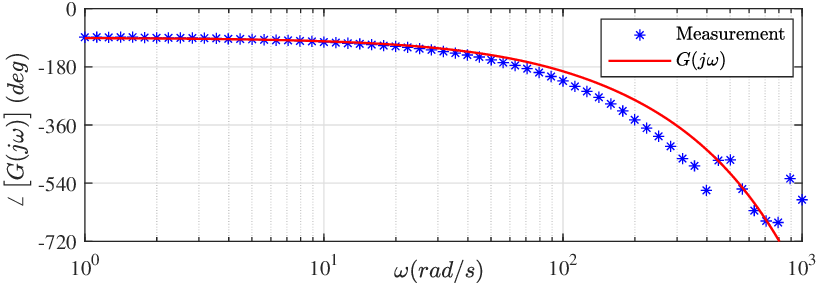}
\caption{Measured and identified FRF of the system.}
\label{Fig:FRF} 
\end{figure}
Note that the total moving mass parameter $m$ is weighed on the electronic scales and thus known, equally as the given electro-magnetic time constant $\mu$, cf. subsection above. Also, the nominal value of the overall actuator gain $K_m$ is known from the technical data-sheets of the voice-coil-motor.
\begin{equation}\label{eq:4:1}
G= \frac{K_m \exp(-\tau s)}{s(m s + \nu)(\mu s + 1)} = \frac{3.28 \exp(-0.011 s)}{0.00064 s^3 + 0.634 s^2 + 80 s}.
\end{equation}
The derived $G(s) = x(s)/u(s)$ model \eqref{eq:4:1} constitutes, this way, the best linear fit in frequency domain. The uncertain and thus experimentally estimated parameters are the linear damping $\nu$ and the time delay factor $\tau$. More details on the accomplished FRF based identification of the system can be found in \cite{ruderman2022disturbance}. Note that the time delay appears largely as not negligible, cf. phase diagram in Fig. \ref{Fig:FRF}, and can be seen as the one lumped delay value accumulated from all converters, hardware channels of data processing, and amplifiers in the loop. Further recall that the transfer function \eqref{eq:4:1} serves as a basis for symmetrical optimum tuning of the PID controller under examination, cf. section \ref{sec:3:1}.

\subsection{Comparison of three differentiators}
\label{sec:3:5}

The ADF is parameterized by taking into account the CPU capacities of the control board with hard real-time constraints and an estimate of the upper bound on the measurement noise $d$, cf. section \ref{sec:2}. An example of the measured $x(t)$ at zero control value is shown in Fig. \ref{Fig:ADFexpest} (a), including its distribution histogram. Allowing for the single impulse-shape outliers, the set parameter is $\delta = 0.0001$ m, while $R_{max} = 140$ for keeping the real-time data processing manageable and, therefore, to avoid a CPU overrun during the execution.    
\begin{figure}[h!]
\centering \includegraphics[width=0.98\columnwidth]{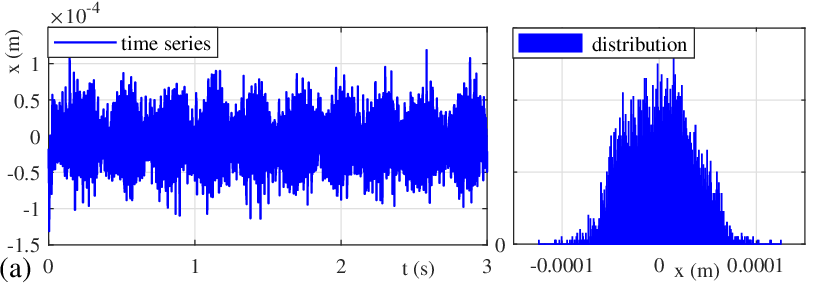}
\centering \includegraphics[width=0.99\columnwidth]{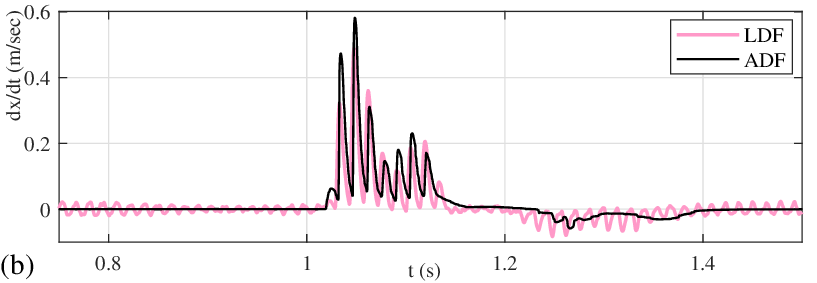}
\caption{Measured noisy output and its distribution at the idle state ($u=0$) in (a), and exemplary $\dot{x}(t)$ estimate by means of LDF and ADF in (b).}
\label{Fig:ADFexpest} 
\end{figure}  
The LDF is parameterized by $\omega_0 = 600$ rad/sec. Note that with respect to the designed PID controller, cf. section \ref{sec:3:1}, such low-pass filter (applied for obtaining $\dot{x}(t)$ estimate only) does not significantly change the shaped loop transfer characteristics. Solely the otherwise sufficient phase margin is reduced by less than $5$ deg, cf. Fig. \ref{Fig:LoopOpt}. An exemplary estimate of the output velocity, obtained by means of the designed LDF and ADF, are shown opposite to each other in Fig. \ref{Fig:ADFexpest} (b) for the measured $x(t)$ during the controlled step response, cf. below. Finally, the RED is parameterized by setting $\kappa = 8$, that is experimentally tuned as provided in \cite{ruderman2022disturbance} on an up-chirp signal excitation response within one decade (e.g. between 8 rad/sec and 80 rad/sec).  

For the sake of a better traceability, the numerically determined FRFs of all three differentiators are shown for two and a half decades in Fig. \ref{Fig:DiffFRF}, opposed to each other and to an exact differentiation, i.e. $|j\omega|$. Note that waves in the shape at lower frequencies are associated with chirp-signal used for a smooth input. 
Therefore, it allows for an only quasi-steady-state frequency consideration, especially in the fist decade. Apart from that, one can recognize that all three differentiators provide an expected 20 dB/dec increase over the total frequency range of interest. To notice is that the RED breaks down around $\omega = 100$ rad/sec, -- a bandwidth for which it was designed by assigning the scaling factor $\kappa$.
\begin{figure}[h!]
\centering \includegraphics[width=0.99\columnwidth]{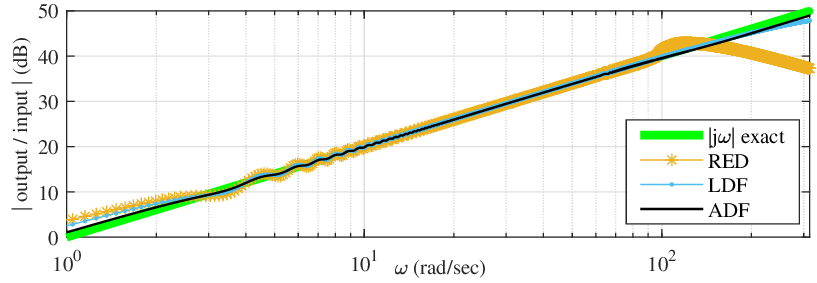}
\caption{Numerically determined FRF of three differentiators.}
\label{Fig:DiffFRF} 
\end{figure}  

The designed PID control (cf. section \ref{sec:3:1}) is experimentally evaluated with three different differentiators, LPF, ADF, and RED, once on a step and once on a slope reference, as shown in Fig. \ref{Fig:ControlResults}.    
\begin{figure}[h!]
\centering 
\includegraphics[width=0.99\columnwidth]{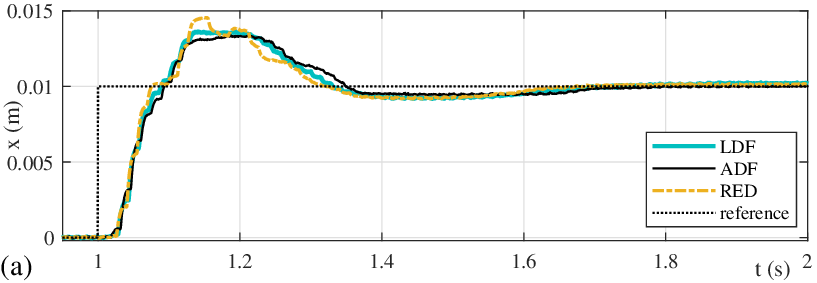}
\includegraphics[width=0.99\columnwidth]{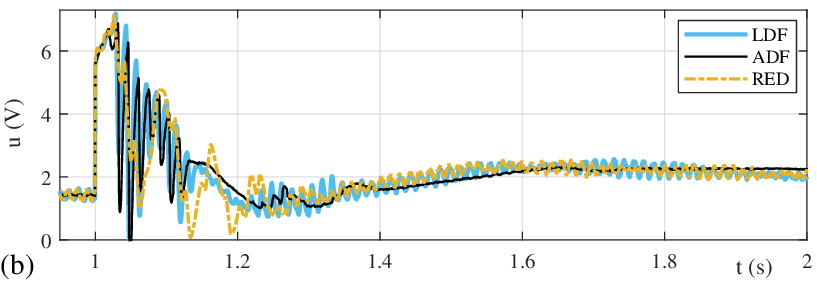}
\includegraphics[width=0.99\columnwidth]{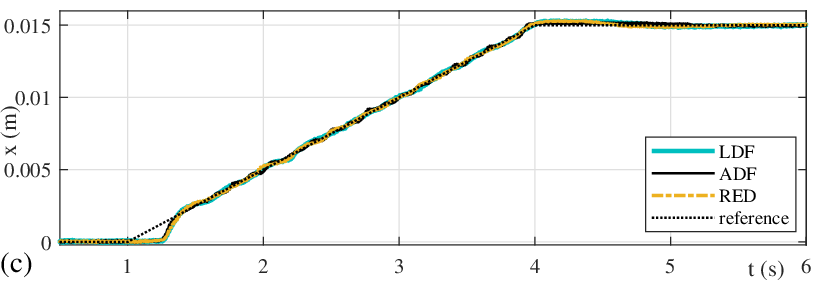}
\includegraphics[width=0.99\columnwidth]{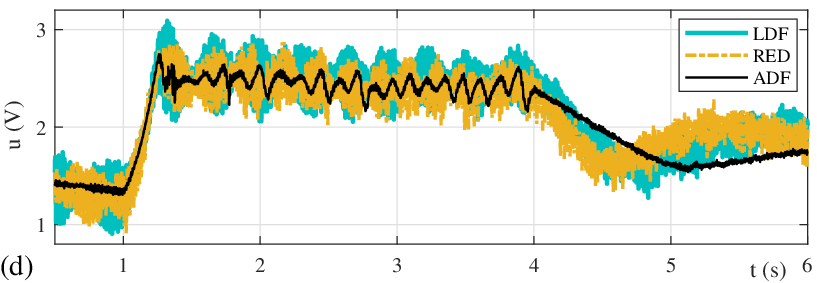}
\caption{Measured PID response (with LDF, ADF, RED) of output and control values: (a), (b) to step reference and (c), (d) to slope reference.}
\label{Fig:ControlResults} 
\end{figure}  
All three differentiation schemes lead to a relatively similar control response, while ADF provides a slightly lower overshoot in case of the step response. Also to notice is a time delay, cf. $\tau$ in section \ref{sec:3:4:2}, and a slow settling of the optimal PID control, cf. Fig. \ref{Fig:ControlResults} (a), due to the nonlinear Coulomb-type friction, see \cite{ruderman2025} for details. An essential advantage of using ADF differentiator becomes evident by inspecting $u(t)$ control values, cf. Fig. \ref{Fig:ControlResults} (b) and (d). While being in principle uniform, the higher oscillatory control patterns (and so energy consumption and system loading) are clearly with LDF and RED due to richer frequency components of $\dot{x}(t)$ estimate, cf. also Fig. \ref{Fig:ADFexpest} (b).

\section{Concluding statements}
\label{sec:4}

A novel causal discrete-time filter based on a constrained least-squares estimator with window adaptation was proposed for differentiating noisy process measurements. It shows low sensitivity to the measurement noise while preserving large amplitude and high-dynamic changes in the process signal. The filter is real-time capable and shows excellent performance in differential feedback control action, evaluated and compared on a real physical motion control application. More detailed analysis of the filter sensitivity and convergence properties, transfer characteristics, and optimal tuning are subject of our upcoming works in progress.

\bibliographystyle{IEEEtran}
\bibliography{references}

\end{document}